\documentclass{ifacconf}

\usepackage{graphicx}      
\usepackage{natbib}        
\usepackage{color}
\usepackage[active]{srcltx}
\usepackage{epsfig}
\usepackage{amsmath}
\usepackage[flalign]{mathtools}
\usepackage{amssymb}
\usepackage{framed}
\usepackage{psfrag}
\usepackage{graphicx}
\usepackage{txfonts}
\usepackage{tikz}
\usetikzlibrary{arrows}
\usepackage{ marvosym }
\usepackage{epstopdf}

\usepackage{color}
\usepackage{tikz}
\usetikzlibrary{arrows}

\newtheorem{dfn}{Definition}
\newtheorem{proof}{Proof}
\newtheorem{example}[thm]{Example}

\begin{document}

\newcommand\independent{\protect\mathpalette{\protect\independenT}{\perp}}\def\independenT#1#2{\mathrel{\rlap{$#1#2$}\mkern2mu{#1#2}}}

\newcommand{\ComplexField}{\mathbf{C}}
\newcommand{\NaturalNumbers}{\mathbf{N}}

\newcommand{\eps}{\varepsilon}
\newcommand{\Eps}{\mathcal{E}}

\newcommand{\w}{\omega}
\newcommand{\C}{\mathbf{C}}
\newcommand{\Z}{\mathbb{Z}}
\newcommand{\I}{\mathcal{I}}
\newcommand{\N}{\mathbb{N}}
\newcommand{\R}{\mathbb{R}}

\newcommand{\tfspan}{\mathrm{tfspan}}
\newcommand{\ctfspan}{\mathrm{ctfspan}}

\newcommand{\ZF}{~{}^{0}\mathcal{F}}

\newcommand{\node}{y}
\newcommand{\nodenumber}{n}
\newcommand{\nodeindex}{j}
\newcommand{\nodeindexalt}{i}
\newcommand{\nodeindexaltb}{k}
\newcommand{\firstnodeindex}{1}
\newcommand{\secondnodeindex}{2}
\newcommand{\thirdnodeindex}{3}
\newcommand{\lastnodeindex}{\nodenumber}
\newcommand{\collider}{c}
\newcommand{\collidernumber}{n_c}
\newcommand{\ancestornumber}{n_a}

\newcommand{\path}{\pi}
\newcommand{\pathfirstnode}{0}
\newcommand{\pathsecondnode}{1}
\newcommand{\pathsecondlastnode}{l-1}
\newcommand{\pathlastnode}{l}
\newcommand{\pathlength}{\ell}
\newcommand{\pathnodeindex}{p}
\newcommand{\pathnodeindexalt}{q}
\newcommand{\pathapex}{(a)}
\newcommand{\pathapexalt}{(b)}

\newcommand{\ov}{\overline}

\newcommand{\parent}[2]{pa_{#1}\left(#2\right)}
\newcommand{\child}[2]{ch_{#1}\left(#2\right)}
\newcommand{\ancestor}[2]{an_{#1}\left(#2\right)}
\newcommand{\descendant}[2]{de_{#1}\left(#2\right)}

\newcommand{\cov}{R}
\newcommand{\timelag}{\tau}
\newcommand{\E}{\mathbf{E}}
\newcommand{\Ztrans}{\mathcal{Z}}

\newcommand{\graph}{G}
\newcommand{\vertexset}{V}
\newcommand{\edgeset}{E}

\newcommand{\subsetvertex}{J}
\newcommand{\subsetvertexalt}{I}
\newcommand{\subsetvertexsep}{Z}

\newcommand{\imap}{\mathcal{I}}
\newcommand{\granger}{\mathcal{G}}
\newcommand{\csep}{\mathcal{C}}

\newcommand{\colliderset}{C}

\newcommand{\PSD}{\Phi}

\newcommand{\TheoremMatSalMarkovBlanketnoncausal}{Theorem~27 in \cite{MatSal12}}
\newcommand{\LemmaMatSalWienerUncorrelated}{Lemma~26 in \cite{MatSal12}}

\newcommand{\childrenset}{\mathcal{C}}
\newcommand{\parentset}{\mathcal{P}}
\newcommand{\coparentset}{\mathcal{K}}
\newcommand{\tf}{H}

\newcommand{\lemmarestrictiontoancestors}{Lemma~6 in \cite{MatSal14}}
\newcommand{\lemmaenlargeseparatedsets}{Lemma~7 in \cite{MatSal14}} 

\newcommand{\wiener}{W}
\newcommand{\Wiener}[3][]{
	\wiener_{
		#2
		\ifx&#1&
		\else
		,[#1]
		\fi
		|#3}
	}

\newcommand{\LDIMGraph}{Direct Feedthrough Graphical Representation }
\newcommand{\LDIMgraph}{direct feedthrough graphical representation }

\begin{frontmatter}

\title{An algorithm for reconstruction of triangle-free linear dynamic networks with verification of correctness} 

\author{Mihaela Dimovska \& Donatello Materassi} 
\address{Department of Electrical and Computer Engineering,\\
		University of Minnesota, 200 Union St SE, 55455, Minneapolis (MN)}

\begin{abstract}                
Reconstructing a network of dynamic systems from observational data is an active area of research. 
Many approaches guarantee a consistent reconstruction under the relatively strong assumption that the network dynamics is governed by strictly causal transfer functions.
However, in many practical scenarios, strictly causal models are not adequate to describe the system and it is necessary to consider models with dynamics that include direct feedthrough terms.
In presence of direct feedthroughs, guaranteeing a consistent reconstruction is a more challenging task.
Indeed, under no additional assumptions on the network, we prove that, even in the limit of infinite data, any reconstruction method is susceptible to inferring edges that do not exist in the true network (false positives) or not detecting edges that are present in the network (false negative). 
However, for a class of triangle-free networks introduced in this article, some consistency guarantees can be provided. We present a method that either exactly recovers the topology of a triangle-free network certifying its correctness or outputs a graph that is sparser than the topology of the actual network, specifying that such a graph has no false positives, but there are false negatives.
\end{abstract}

\begin{keyword}
Frequency domain identification, dynamic networks, sparse networks
\end{keyword}

\end{frontmatter}

\section{Introduction}

Reconstructing a network of dynamic systems from observational data, while providing some provable guarantees about the resulting reconstructed graph is an active area of research.
Many of the results providing consistent reconstruction generally assume that the system dynamics is strictly causal \cite{granger1969, yue2017linear, GonWar08, etesami2014directed}. However, in practice, the necessity of using models with non-necessarily strictly causal dynamics arises in many areas, such as biology \cite{schiatti2015extended, faes2015linear}, neuroscience \cite{seth2015granger} or finance \cite{materassi2009unveiling}.
For this reason, there are methods that, at least in the linear case, try to deal with direct feedthroughs, too. Some of these methods are often based on extension of the notion of Granger causality \cite{granger1969} by adjusting and controlling for intermediate causes \cite{quinn2011estimating, schiatti2015extended}. 
The main drawback of these extensions of Granger causality is that, even in the limit of infinite data, they might output a network topology with both false positives (an inferred edge that is not in the actual network) and false negatives (a missing edge that is in the actual network). Furthermore, even if the output is correct, these methods do not certify its correctness. 
On the other hand, there are methods, such as \cite{runge2015identifying, dimovska2017granger} that borrow tools from the theory of causal inference in graphical models \cite{pearl2009causality, spirtes2000causation}.
Even though these methods still cannot certify the correctness of their output, some of them have the attractive feature of guaranteeing no false positives \cite{dimovska2017granger}.
Guaranteeing that there are also no false negatives in the recovered network topology is a more challenging task.
Indeed, in the presence of direct feedthroughs, with no additional assumptions on the network structure, we show that no method can provide guarantees for no false negatives and no false positives and we illustrate this via an example in Section~\ref{sec:no false negatives}.

Then, a fundamental question to be investigated is the determination of subclasses of networks for which there are topology identification methods guaranteeing exact reconstruction with the additional goal of making such subclasses as extensive as possible. 
In the domain of graphical models, the question of under what assumptions an exact reconstruction is possible has been tackled from different perpectives.
Typically, assumptions are made directly on the network dynamics. For example, graphical models are generally defined using static operators instead of dynamic ones, such as transfer functions, and the extension to the dynamic case is often non-trivial. Another common assumption in the area of graphical models is referred to as ``faithfulness'' and boils down to the absence of cancellations in the paths of the network graph \cite{spirtes2000causation, uhler2013geometry}.
Attempts to weaken the condition of faithfulness have been described, for example, in \cite{park2016identifiability}, but still result in some forms of assumptions on the network dynamics.
In this work, instead, the assumptions are made directly on the graph structure.
Furthermore, the method provided in \cite{park2016identifiability} is of combinatorial complexity in the number of nodes of the network, for any kind of network. 
In this work, our algorithm has the appealing property of running in polynomial time, at least for sparse networks.

The main results in this article can be seen as a step towards finding classes of networks for which it is possible to give guarantees of an exact reconstruction.
We define the subclass of triangle-free unidirectional networks and provide a method that either reconstructs the correct topology while certifying its correctness, or outputs a topology with no false positives, specifying though that there definitely are false negatives.
The article  is organized as follows: 
In Section~\ref{sec:background} we provide the necessary background terminology and information to derive the main results; in Section~\ref{sec:no false negatives} we provide a counterexample that shows that no algorithm can guarantee an exact reconstruction without additional network assumptions; in Section~\ref{sec:main results} we present the main results; in Section~\ref{sec:examples} we provide examples that demonstrate the new algorithm under different input scenarios; in Section~\ref{sec:conclusion} we present the conclusions of this work.

\section{Preliminary notions}\label{sec:background}
The goal of this section is first to provide some background about graph theory concepts which are widely used in the literature about graphical models, such as chains and colliders in paths, Markov blanket of a node in a directed graph and moral graphs \cite{koller2009probabilistic}.
Then, we will introduce the class of network models called Linear Dynamic Influence Models (LDMIs) which will be the focus of our work \cite{materassi2019signal}.

\subsection{Graph theory concepts}

First, we recall the concepts of directed/undirected graphs. 

\begin{dfn}[Directed and Undirected Graphs]~\\
	A directed (undirected) graph $G$ is a pair $(V,E)$, where $V=\{y_{1}, y_{2}, ..., y_{n}\}$  is a set of vertices or nodes and $E\subseteq V^{2}$ is a set of edges or arcs, which are ordered (unordered) subsets of pairs of elements in~$V$. 
\end{dfn}
It is possible to associate an undirected graph to any directed graph by removing the orientation of its links.
\begin{dfn}[Skeleton ~\cite{Pea88}]
	Given a directed graph $\graph=(\vertexset,\edgeset)$, we define its ``skeleton'' (or ``topology'') as the undirected graph $(\vertexset,\ov \edgeset)$ obtained by removing the orientation of its edges. 
\end{dfn}
On an undirected graph we recall that $y_{j}$ and $y_{i}$ are neighbors if and only if $\{y_{i}, y_{j}\} \in \overline{E}$. We denote the set of neighbors of a node $y_{i}$ as $N(y_{i})$.  
For undirected graphs defined over the same vertex set, we introduce a partial order relation which is based on their adjacencies. 
\begin{dfn}[Upper/Lower Bound of an undirected graph]
Let $G_{1} = (V, E_{1})$ and $G_{2} = (V, E_{2})$ be two undirected graphs with the same vertex set. If $E_{1} \subseteq E_{2}$, we say that $G_{2}$ is an upper-bound for $G_{1}$, or equivalently that $G_{1}$ is a lower-bound for $G_{2}$.
\end{dfn}

On a directed graph we also recall what ``chains'' (or ``directed paths'') are (see~\cite{Die12} or \cite{Pea88} for more details and for the formal definition). 
	A {\it chain} starting from node
	$y_{i}$
	and ending in node
	$y_{j}$
	is an ordered set of edges in $\edgeset$
	$(\,(y_{\path_{1}},y_{\path_{2}}),\ldots, (y_{\path_{\pathlength-1}}, y_{\path_{\pathlength}})\,)$
	where
	$y_{\path_{1}}=y_{i}$,
	$y_{\path_{\pathlength}}=y_{j}$.
We also use the standard notions of parents, children, ancestors, and descendants in a directed graph $G$.
	A vertex $y_i$ is a {\it parent} of a vertex $y_j$ if there is a directed edge from $y_i$ to $y_j$.
	In such a case, we also say that $y_j$ is a {\it child} of $y_i$.
	Furthermore, $y_i$ is an  {\it ancestor} of  $y_j$ if $y_i=y_j$ or there is a chain from $y_i$ to $y_j$.
	In such a case, we also say that $y_j$ is a {\it descendant} of $y_i$. 
	A {\it path} between two vertices $y_{i}$ and $y_{j}$	is an ordered sequence of ordered pairs of nodes 
	$\left((y_{\pi_{1}}, y_{\pi_{2}}), (y_{\pi_{3}}, y_{\pi_{4}}), ..., (y_{\pi_{m-1}}, y_{\pi_{m}})\right) $, 
	with $y_{\pi_{1}} =y_{i}, y_{\pi_{m}}=y_{j}$, 
	$y_{\pi_{1}} \neq y_{\pi_{2}} ...\neq y_{\pi_{m}} $,
	such that $(y_{\pi_{l}}, y_{\pi_{l+1}}) \in E$ or $(y_{\pi_{l+1}}, y_{\pi_{l}}) \in E$, for all $l=1...,m-1$.

 On a given path we define the notion of colliders.
 \begin{dfn}[Colliders and coparents]
    A path has a {\it collider} at
 	$y_{k}$
 	if
 	$y_{i}$
 	and
 	$y_{j}$
 	are both parents of
 	$y_{k}$
 	(that is
 	$y_{i} \rightarrow y_{k} \leftarrow y_{j}$
 	appears in the path). We call $y_{i}$ and $y_{j}$ {\it coparents}. 
 \end{dfn}

 \begin{dfn}[Moral graph \cite{koller2009probabilistic}] Given an oriented graph $\graph=(\vertexset,\edgeset)$, its moral graph is the undirected graph $G_{M} = (V, E_{M})$ where  $ \left\{ {y_i, y_j }\right\} \in E_{M}$ if $(y_i, y_j) \in E$, or $(y_j, y_i) \in E$, or $y_i$ and $y_j$ are coparents in $G$. 
 \end{dfn}
 
 \begin{dfn}[Markov Blanket \cite{koller2009probabilistic}]
 Let $G = (V,E)$ be an oriented graph and let $G_{M} = (V, E_{M})$ be its moral graph. 
 The Markov Blanket of a node $y_j \in V$, denoted by $MB(y_{j})$, is the set of neighbors of $y_j$ in the moral graph. 
 \end{dfn}
 \begin{figure}[h!]
 	\centering
 	\begin{tabular}{c}
 	\includegraphics[width=0.9\columnwidth]{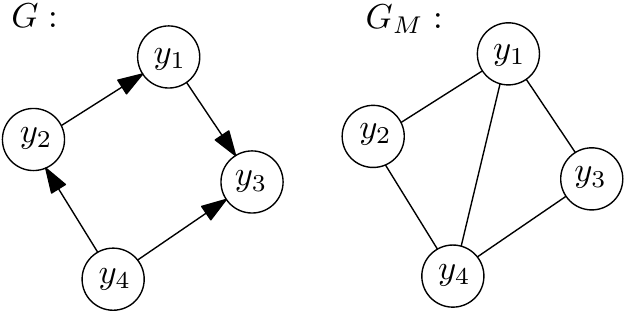}
 	\end{tabular}
 	\caption{A directed graph $G$ and its moral graph $G_{M}$. Notice that $G_{M}$ has an edge between the co-parents $y_{1}$ and $y_{4}$. The Markov Blanket of $y_1$ is $MB(y_1) = \left\lbrace y_2, y_3, y_4 \right\rbrace$. \label{fig:markov blanket}}
 \end{figure}
 An example of a moral graph together with the Markov blanket of a node from that graph is given in Figure~\ref{fig:markov blanket}. 
 
 Next, we introduce the class of systems that we consider in this work. 
 
\subsection{Generative class of networks: Linear Dynamic Influence Models}

Linear Dynamic Influence Models (LDIMs) are a class of networks representing input/output relations among stochastic processes \cite{materassi2012problem, dimovska2017granger}.

\begin{dfn}[Linear Dynamic Influence Model]

A Linear Dynamic Influence Model $\mathcal{G}$ is a pair $(H(z),e)$ where
	\begin{itemize}
		\item	$e=(e_1, ..., e_n)^{T}$ is a vector of $N$ scalar processes $e_{1}$, ... $e_{n}$, such that 
		$\Phi_{e}(z)$, the Power Spectral Density (PSD) of $e$,
		is real-rational and diagonal, namely $\Phi_{e_{i}e_{j}}=0$ for $i\neq j$.
		\item	$H(z)$ is an $n\times n$ real-rational transfer matrix.    $H(z)$ is termed as the ``dynamics'' of the LDIM.
	\end{itemize}
	The output processes $\{y_j\}_{j=1}^{n}$ of the LDIM are defined as
	$y_j=e_j+\sum_{i=1}^{n}H_{ji}(z)y_i$,
	or in a more compact way 
	$
		y=e+H(z)y
	$,
	where $y=(y_{1}, ..., y_{n})^{T}$.

\end{dfn}
Every LDIM admits a graphical representation as a directed graph, called the causal graph of the LDIM. 

\begin{dfn}[Causal Graph of a LDIM]
Let $\mathcal{G}$ be an LDIM. Let $G=(V,E)$ be a directed graph defined as follows.
$V=\{y_{1},...,y_{n}\}$ is the set of the output processes of the LDIM
and the set of edges $E$ contains 
$(y_{i}, y_{j})$ if and only if $H_{ji} \neq 0$.
We refer to $G$ as the causal graph of the LDIM.
\end{dfn}

Throughout this work we will refer to nodes and edges of the LDIM, meaning nodes and edges of the causal graph of the LDIM.

%

\begin{dfn}[Unidirectional Triangle-free LDIMs]
An LDIM is unidirectional triangle-free if the causal graph of the LDIM does not have any  loops of length $2$ and its skeleton does not contain any triangles.  
\end{dfn}

We note that the class of triangle-free networks contains some important classes of networks, for example, trees and polytrees \cite{sepehr2016inferring, sepehr2019blind}.
Next, we define the concept of Wiener separation \cite{materassi2019signal} that will be used to decide if two nodes are directly connected or not. 

\begin{dfn}[Wiener separation]\cite{materassi2019signal}
 Let $(H(z),e)$ be a LDIM. We say that the process $y_{j}$ is Wiener separated from $y_{i}$ given a set of processes $S \subseteq y$ if the Non-Causal (Causal) Wiener filter \cite{Wie49} estimating $y_{j}$ from $y_{i} \cup S$ has zero-entry associated with $y_{i}$. 
\end{dfn}

Note that the concept of Wiener separation can be defined using either the causal Wiener filter or its non-causal counterpart. Depending on whether we use the causal or non-causal Wiener filter, we denote Wiener separation as 
$cwsep(y_{j}, S, y_{i})$ and $wsep(y_{j}, S, y_{i})$, respectively. 
Similarly, if Wiener separation does not hold, we use the notation
$\neg cwsep(y_{j}, S, y_{i})$ and $\neg wsep(y_{j}, S, y_{i})$.


\section*{Network Skeleton Reconstruction Problem}

From the PSD matrix of the output processes $y$ of a LDIM, determine the skeleton of the network.  \\

\section{Counterexample for accuracy certificate without network assumptions}\label{sec:no false negatives}
In this section we show that, without any additional assumptions, 
the Network Skeleton Reconstruction Problem is not well-posed, namely it does not admit, in general, a unique solution.

Indeed, we provide a counterexample showing that without any additional assumptions on the network of the system, no method can guarantee an exact reconstruction of the skeleton of the network from observational data. 

Consider an LDIM $\mathcal{G}_{1} = (H_{1}(z), E_{1})$ with transfer function 
\[H_{1}(z) = 
\begin{bmatrix}
0 & 0 & 0\\
a & 0 & 0 \\
c & b & 0
\end{bmatrix}  \,,
\quad \Phi_{E_{1}E_{1}}=\begin{bmatrix}
1 & 0 & 0\\
0 & 1 & 0 \\
0 & 0 & 1
\end{bmatrix}
\]
where $a,b,c \in \mathcal{R}$, $a,b,c \neq 0$. The causal graph of this LDIM is shown in Figure~\ref{fig:triangle}(a).

Alternatively, consider the LDIM $\mathcal{G}_{2} = (H_{2}(z), E_{2})$:
\[H_{2}(z) = 
\begin{bmatrix}
0 & 0 & 0\\
a & 0 & \frac{b}{b^2+1} \\
0 & 0 & 0 
\end{bmatrix}; \quad \Phi_{E_{2}E_{2}} = 
 \begin{bmatrix}
1 & 0 & 0\\
0 & \frac{1}{b^2+1} & 0 \\
0 & 0 & b^2+1 
\end{bmatrix} \,.\]
The causal graph of this LDIM is shown in Figure~\ref{fig:triangle}(b). 
Note for $c=-a\cdot b$ the PSDs of $\mathcal{G}_{1}$ and $\mathcal{G}_{2}$ are equal: 
 \[\Phi_{Y_{1}Y_{1}} = \Phi_{Y_{2}Y_{2}} =  
 \begin{bmatrix}
1 & a & 0\\
a & a^2+1 & b \\
0 & b & b^2+1 
\end{bmatrix} \,.\]

As the PSDs of both systems are the same, no method can distinguish these two systems from observational data only, thus no method can guarantee an exact reconstruction from observational data for networks with general dynamics and topology.  


 \begin{figure}[h!]
	\centering
	\begin{tabular}{cc}
	\includegraphics[width=0.3\columnwidth]{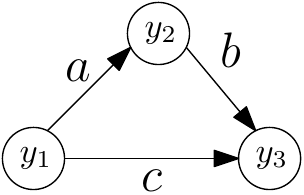} 
	&
	\includegraphics[width=0.3\columnwidth]{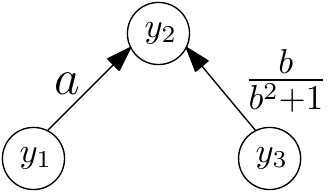}	
	\\ 
	(a) & (b) \\ 
	
	\end{tabular}\caption{(a) The causal graph of $\mathcal{G}_{1}$ (b) The causal graph of $\mathcal{G}_{2}$. We cannot distinguish between these two causal graphs given the dynamics of  $\mathcal{G}_{1}$ and  $\mathcal{G}_{2}$. \label{fig:triangle}}
\end{figure}  

The reason why these two systems cannot be distinguished is because of path cancellations occuring in $\mathcal{G}_{1}$. Namely, the influence of $y_{1}$ to $y_{3}$ through the direct edge $y_{1}\to y_{3}$ cancels the influence of $y_{1}$ to $y_{3}$ through the path $y_{1} \to y_{2} \to y_{3}$. Thus, in $\Phi_{Y_{1}Y_{1}}$ we get zeros in the entries $\{1,3\}$ and $\{3,1\}$. This enables us to construct another system, with a sparser causal graph, that has the same PSD as $\mathcal{G}_{1}$.

The example discussed above leads to the question of whether, under some additional assumptions, we can provide some guarantees for both no false positives and no false negatives in the identification of graph edges. 
Furthermore we would like those additonal assumptions to be imposed on the underlying structure of the network, as we would like to keep the assumptions on the dynamics mild (namely only algebraic loops shouldn't be allowed). Further, the role of these graphical assumptions should be to prevent the occurence of situations like the one in the previous example, where we have the cancellation of the effect of a parent node on its child because of the intermediate action of a coparent. In the following, by assuming that the LDIM is unidirectial triangle-free we will achieve this goal. 

\section{Main Results}\label{sec:main results}

In this section we present an algorithm with provable guarantees for the correct reconstruction of the skeleton of a unidirectional triangle-free LDIM. We start by providing a result that enables to infer an upper bound of the skeleton of a triangle-free LDIM. The result is summarized in the following Lemma. 

\begin{lem}[Moral graph identification]\label{lem:moral graph}
 Consider the unidirectional triangle-free LDIM $(H(z),e)$ having $G=(V,E)$ as its causal graph. Let $G_{M}$ be the moral graph of $G$.
 For all unordered pairs $y_{i}$ and $y_{j}$ let $S_{ij}=y\setminus \{y_{i}, y_{j}\}$. Define the graph $\overline{G}=(V,\overline{E})$ where
 $\{y_{i},y_{i}\}\in \overline{E}$ if and only if $\neg wsep(y_{j}, S, y_{i})$.
 We have that $\overline{G}$ is a lower bound for $G_{M}$ and an upper bound for the skeleton of $G$. 
\end{lem}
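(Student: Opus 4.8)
The plan is to reduce Wiener separation to the sparsity pattern of the inverse power spectral density (precision) matrix and then to read off that pattern directly from the factorization induced by the LDIM equations. By \LemmaMatSalWienerUncorrelated, for the non-causal Wiener filter we have that $wsep(y_{j}, S_{ij}, y_{i})$ holds if and only if the $(j,i)$ entry of $\Phi_y^{-1}$ vanishes, so it suffices to analyze the support of $\Phi_y^{-1}$.

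First I would exploit the generative structure. From $y = e + H(z)y$ one gets $y = (I-H)^{-1}e$, hence $\Phi_y = (I-H)^{-1}\Phi_e (I-H)^{-*}$ and therefore
\[
\Phi_y^{-1} = (I-H)^{*}\,\Phi_e^{-1}\,(I-H).
\]
Since $\Phi_e$ is diagonal, $\Phi_e^{-1}$ is diagonal with entries $1/\Phi_{e_k e_k}$, and the $(j,i)$ entry becomes
\[
[\Phi_y^{-1}]_{ji} = \sum_{k=1}^{n}\frac{1}{\Phi_{e_k e_k}}\,\overline{(\delta_{kj}-H_{kj})}\,(\delta_{ki}-H_{ki}).
\]
Because $H_{ki}\neq 0$ exactly when $y_{i}\to y_{k}$ is an edge, the factor $(\delta_{ki}-H_{ki})$ is nonzero only for $k=i$ or when $y_k$ is a child of $y_i$, and symmetrically for the $j$-factor. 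Consequently only three types of index $k$ can contribute: $k=i$ (present iff $y_{j}\to y_{i}$), $k=j$ (present iff $y_{i}\to y_{j}$), and common children $k$ of $y_{i}$ and $y_{j}$ (present iff $y_{i},y_{j}$ are coparents with a collider at $y_k$). This is precisely the neighborhood structure recorded by the moral graph. The lower bound $\overline{E}\subseteq E_{M}$ then follows contrapositively: if $\{y_{i},y_{j}\}\notin E_{M}$ then $y_{i},y_{j}$ are neither parent--child nor coparents, every term above vanishes, $[\Phi_y^{-1}]_{ji}=0$, and hence $wsep(y_{j},S_{ij},y_{i})$; equivalently $\neg wsep$ forces $\{y_{i},y_{j}\}\in E_{M}$. (This is the direction guaranteed in general by \TheoremMatSalMarkovBlanketnoncausal.)

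The crux is the upper bound, namely that every skeleton edge survives in $\overline{E}$, and this is exactly where the unidirectional triangle-free hypotheses enter. Suppose $\{y_{i},y_{j}\}$ is an edge of the skeleton; by the no-$2$-loop assumption exactly one orientation holds, say $y_{i}\to y_{j}$, so only the $k=j$ term and possible coparent terms can appear. I would then observe that if $y_{i}$ and $y_{j}$ shared a child $y_{k}$, the three edges $y_{i}\to y_{j}$, $y_{i}\to y_{k}$, $y_{j}\to y_{k}$ would make $\{y_{i},y_{j},y_{k}\}$ a triangle in the skeleton, contradicting triangle-freeness. Hence no coparent term is present and $[\Phi_y^{-1}]_{ji}$ reduces to the single term $-\,\overline{(1-H_{jj})}H_{ji}/\Phi_{e_{j}e_{j}}$, a nonzero rational function since $H_{ji}\neq 0$, $\Phi_{e_{j}e_{j}}\neq 0$, and $I-H$ is invertible. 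Therefore $\neg wsep(y_{j},S_{ij},y_{i})$ and $\{y_{i},y_{j}\}\in\overline{E}$.

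The main obstacle is ruling out cancellation of the direct-edge contribution: in general, as the counterexample of Section~\ref{sec:no false negatives} shows, the term produced by an edge can be exactly annihilated by contributions from a shared child. The triangle-free assumption is tailored to forbid precisely the edge--coparent coincidence that creates such cancellations, while unidirectionality guarantees that at most one of the two single-index terms is present so they cannot interfere with each other. I would close by emphasizing that coparent cancellations \emph{among non-adjacent nodes} remain permitted, which is exactly why $\overline{G}$ is only claimed to be a lower bound for $G_{M}$ rather than equal to it.
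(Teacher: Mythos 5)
Your proof is correct and, at its core, follows the same route as the paper: both arguments reduce the question to a three-term decomposition of the $y_{i}$-coefficient of the non-causal Wiener filter into a child term, a parent term, and a coparent term, and both use unidirectionality plus triangle-freeness to conclude that for a skeleton edge at most one of these terms can be present, so no cancellation can annihilate it. The difference is one of packaging: the paper simply cites Theorem~30 and Lemma~31 of \cite{materassi2012problem} for the decomposition $W_{ji}(z) = C_{ji}(z) + P_{ji}(z) + K_{ji}(z)$ and its child/parent/coparent characterization, whereas you re-derive that decomposition from scratch via the factorization $\Phi_y^{-1} = (I-H)^{*}\Phi_e^{-1}(I-H)$ and the identification of Wiener-filter coefficients with entries of the precision matrix. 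What your version buys is self-containedness and transparency: it makes explicit that cancellations can only occur among the common-children terms $\overline{H_{kj}}H_{ki}/\Phi_{e_{k}e_{k}}$, which is exactly the phenomenon behind the paper's example in Figure~\ref{fig:coparents_cancellation} and the reason $\overline{G}$ is only a lower bound for $G_M$; the paper's version is shorter but cannot be checked without consulting the reference. One small repair is needed in your final step: the claim that the surviving term $-\overline{(1-H_{jj})}H_{ji}/\Phi_{e_{j}e_{j}}$ is nonzero ``since $I-H$ is invertible'' is not a valid inference, because an invertible matrix can have zero diagonal entries (e.g., an antidiagonal permutation). What you actually need is the standard LDIM convention that $H$ has zero diagonal (no self-loops), under which $\overline{(1-H_{jj})}=1$ and the term is nonzero exactly when $H_{ji}\neq 0$; the paper's cited lemmas implicitly rest on the same convention, so this is a slip in justification rather than a gap in the argument.
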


\begin{proof}
 From Theorem~30 in \cite{materassi2012problem} we have that the non-causal Wiener estimator of $y_{j}$ from $y_{i}\cup S$ is of the form:
 \[
  \hat {y_{j}} = \sum_{y_{k} \in S}{W_{jk}(z)y_{k}} + W_{ji}y_{j} \,.
 \]
 Note that $W_{ji}(z)$ denotes the component of the Wiener filter corresponding to $y_{i}$, when we estimate $y_{j}$ from $y_{i} \cup S$. 
 From Lemma~31 in \cite{materassi2012problem} we have that the component $W_{ji}(z)$ is a sum of three terms, namely $W_{ji}(z) = C_{ji}(z) + P_{ji}(z) + K_{ji}(z)$, where $C_{ji}(z) \neq 0$ if and only if $y_{i}$ is a child of $y_{j}$, $P_{ji}(z) \neq 0 $ if and only if $y_{i}$ is a parent of $y_{j}$, and $K_{ji}(z) \neq 0$ implies that $y_{i}$ and $y_{j}$ are co-parents of a node.
 Thus, if two nodes $y_{i}$ and $y_{j}$ are connected in $\overline{G}$ then they are connected in $G_{M}$, proving that $\overline{G}$ is a lower bound for $G_{M}$.
 
 Now we prove that the skeleton of $G$ is a lower bound for $\overline{G}$.
 If $\{y_{i},y_{j}\}$ is in the skeleton of $G$ then either $C_{ji}(z)$ or $P_{ji}(z)$ are different from $0$.
 Since the input graph is unidirectional triangle-free, at most one of the terms $C_{ji}(z)$, $P_{ji}(z)$ or $K_{ji}$ is different from zero, hence $W_{ji}(z)$ is different from zero.
\end{proof}
 
 We further illustrate the above Lemma with a simple example. Consider a dynamic system with a network as shown in Figure~\ref{fig:coparents_cancellation}(a). Let the dynamics be given by the direct feedthrough transfer functions 
 $h_{31} = -a, h_{32} = b, h_{41} = a, h_{42}=b$, with $a,b \neq 0$. Then, the non-causal Wiener filter component corresponding to $y_{2}$, when estimating $y_{1}$ given $S=\{y_{3}, y_{4}\}$ and $y_{2}$, is $0$. Thus, while the true moral graph of the network is the graph shown in Figure~\ref{fig:coparents_cancellation}(b), the graph that is inferred, by following the result of Lemma~\ref{lem:moral graph}, is the one shown in Figure~\ref{fig:coparents_cancellation}(c).

 \begin{figure}[h!]
	\centering
	\begin{tabular}{ccc}
	\includegraphics[width=0.29\columnwidth]{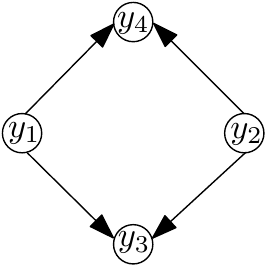} 
	&
	\includegraphics[width=0.29\columnwidth]{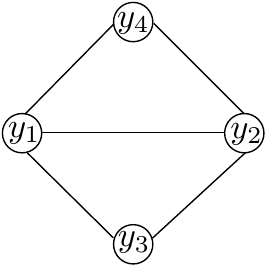}
	&
	\includegraphics[width=0.29\columnwidth]{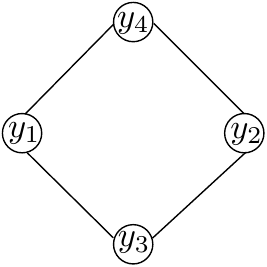}
	\\ 
	(a) & (b) & (c) 
	
	\end{tabular}\caption{(a) Network of a dynamic system where, due to the specific transfer functions, the inferred moral graph by using Lemma~\ref{lem:moral graph} is the one shown in (c), while its true moral graph is the one shown in (b).  \label{fig:coparents_cancellation}}
\end{figure}  
 
 Thus, with Lemma~\ref{lem:moral graph} we obtain an undirected graph that is an upper bound to the true skeleton of the network- namely it contains some false positive edges linking the coparents.
 In order to remove these false positive edges, we exploit a variation of Mixed Delay (MD) algorithm from \cite{dimovska2017granger}. 
 MD algorithm is a network reconstruction algorithm which starts from the complete graph and provably outputs a lower bound for the skeleton of a LDIM by sequentially testing all the edges. 
Here, we recall the edge-removal properties of the MD algorithm that guarantee that it will not infer any false positive edges for the reconstructed skeleton. 
Namely, MD algorithm relies on the following facts: if there is no edge 
 $\{y_{i},y_{j}\}$ in the skeleton of the causal graph of the LDIM then 
\begin{enumerate}
 \item[\tt MD1] There exist two sets 
 $S_{ji}^{+} \subseteq y\setminus \{y_{i}, y_{j}\}$
 and $S_{ji}^{-} \subseteq \frac{1}{z}y$ such that the $y_{i}$ component of the causal Wiener filter estimating $y_{j}$ given $y_{i} \cup S_{ji}^{+} \cup S_{ji}^{-}$ is strictly causal and
  \item[\tt MD2] There exist 
  $S_{c}^{i} \subseteq y\setminus \{y_{j}\} $ and $S_{s}^{i} \subseteq \frac{1}{z}y \setminus \frac{1}{z}y_{i}$
  such that 
  $cwsep(y_{j}, S_{c}^{i}\cup S_{s}^{i}, \frac{1}{z}y_{i})$ and 
  \item[\tt MD3] There exist 
  $S_{c}^{j} \subseteq y\setminus \{y_{i}\} $ and $S_{s}^{j} \subseteq \frac{1}{z}y \setminus \frac{1}{z}y_{j}$ such that 
  $cwsep(y_{i}, S_{c}^{j}\cup S_{s}^{j}, \frac{1}{z}y_{j})$ .
  \end{enumerate}
Because of implications 1), 2) and 3), if there is no edge  $\{y_{i}, y_{j}\}$ in the skeleton of the LDIM, the sets described in each of the steps above are guaranteed to be found, so the MD algorithm correctly infers that there is no edge between those two nodes.

The main drawback of MD algorithm is that its computational complexity is combinatorial with the number of LDIM nodes.
 Indeed, in order to determine if the edge 
 $\{y_{i},y_{j}\}$ belongs to its output, MD algorithm needs to run a search among all subsets of $y$.
 However, following the proofs of Theorem~4.1 and Theorem~4.2 in \cite{dimovska2017granger}, we have the following observation. If there is no edge between $y_{i}$ and $y_{j}$, then separating sets with properties as described in each of the three steps above, can be constructed using only nodes which are parents of either $y_{i}$ or $y_{j}$.  
Since MD has no a-priori knowledge of the causal graph, it cannot limit the search to 
the parents of either $y_{i}$ or $y_{j}$ and instead extends it to all nodes.
However, note that Lemma~\ref{lem:moral graph} enables us to find an upper bound $\overline{G}$ of the true skeleton $G$. For every node $y_{j} \in \overline{G}$ the set of parents of $y_{j}$ are contained in the set of neighbors of $y_{j}$, $N(y_{j})$. Thus, we can use the following improved variation of the MD-algorithm.
If there is no edge between $y_{i}$ and $y_{j}$ in the input LDIM, then: 
 \begin{enumerate}
  \item[\tt MD1+] There exists $S_{ji}^{+} \subseteq N(y_{i}) \cup N(y_{j})$ and $S^{-} \subseteq \frac{1}{z} N(y_{i}) \cup \frac{1}{z} N(y_{j}) $ such that the $y_{i}$ component of the causal Wiener filter estimating $y_{j}$ given $y_{i} \cup S_{ji}^{+} \cup S_{ji}^{-}$ is strictly causal and
  \item[\tt MD2+] There exists $S_{c}^{i} \subseteq N(y_{i}) \cup N(y_{j}) $ 
  and 
  $S_{s}^{i} \subseteq \{\frac{1}{z}N(y_{i}) \cup \frac{1}{z}N(y_{j})\}$ such that 
  $cwsep(y_{j}, S_{c}^{i}\cup S_{s}^{i}, \frac{1}{z}y_{i})$ and 
  \item[\tt MD1+] There exists $S_{c}^{j} \subseteq N(y_{i}) \cup N(y_{j})$ and 
  $S_{s}^{j} \subseteq \{ \frac{1}{z}N(y_{i}) \cup \frac{1}{z}N(y_{j}) \} $ such that 
  $cwsep(y_{i}, S_{c}^{j} \cup S_{s}^{j}, \frac{1}{z}y_{j})$ 
 \end{enumerate}
Now we are ready to state the main result of this article in the next theorem.

 \begin{thm}
 Let $\mathcal{G}$ be a LDIM. 
Let $\overline{G}$ be an undirected graph that is an upper bound of the skeleton of the LDIM and a lower bound of the moral graph of that skeleton. Let $(y_{i}, y_{j}, y_{k})$ be an unordered triplet denoting three nodes of a triangle in the graph $\overline{G}$. If by using steps {\tt MD1+, MD2+ and MD3+} exactly one edge from the triangle $(y_{i}, y_{j}, y_{j})$ is removed, then that edge was a co-parent edge. If more than one edge is removed, then one of those removals is a false negative. 
\end{thm}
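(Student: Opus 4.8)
The plan is to leverage two properties: the triangle-freeness of the skeleton and the no-false-positive guarantee of the MD procedure. First I would observe that, since the causal graph is unidirectional triangle-free, its skeleton contains no triangle; hence the three edges of the $\overline{G}$-triangle on $\{y_{i},y_{j},y_{k}\}$ cannot all be skeleton edges, so at least one of them is a coparent (moralization) edge that is absent from the true skeleton. Second, I would invoke the fact that the MD algorithm outputs a lower bound of the skeleton: every edge absent from the skeleton satisfies the removal conditions, and the neighbor-restricted versions \texttt{MD1+}, \texttt{MD2+}, \texttt{MD3+} inherit this because the separating sets used in the proofs of Theorem~4.1 and~4.2 of \cite{dimovska2017granger} can be taken among the parents of $y_{i}$ or $y_{j}$, which are contained in their $\overline{G}$-neighborhoods $N(y_{i})\cup N(y_{j})$. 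Consequently every coparent edge present in the triangle is guaranteed to be removed.

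With these two facts, the first assertion follows by a counting argument. By triangle-freeness there is at least one coparent edge, and by the guarantee above it is removed, so at least one edge is always deleted. If exactly one edge is removed, it cannot be a genuine skeleton edge: were it genuine, the (necessarily present) coparent edge would also have been removed, forcing at least two deletions. Hence the unique removed edge is the coparent edge, and moreover the triangle had exactly one coparent edge with the other two being true skeleton edges.

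For the second assertion I would argue that any deletion beyond the single ``legitimate'' one must be the removal of a true skeleton edge, i.e.\ a false negative. The key supporting claim is structural: a triangle in $\overline{G}$ contains \emph{exactly} one coparent edge, the other two forming a collider $y_{i}\to y_{\bullet}\leftarrow y_{k}$ whose apex is the third node of the triangle. I would prove this by classifying the directed relations among the triplet and using unidirectionality (no $2$-loops) together with triangle-freeness to pin down the unique v-structure. Granting this, at most one removal is correct, so if more than one edge is deleted at least one genuine edge has been dropped; the resulting retained subgraph is then a strict lower bound of the skeleton, matching the ``sparser, no false positives, but false negatives'' certificate described in the abstract. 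The complementary ingredient here is the no-cancellation consequence of triangle-freeness, which is precisely what ensures that, in the absence of pathological restriction effects, a genuine edge is not spuriously separable and is therefore retained.

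The hard part will be establishing the structural claim that each triangle of $\overline{G}$ carries exactly one coparent edge. The delicacy is that two nodes can be coparents through a shared child lying \emph{outside} the triplet, which a priori permits triangles with two or even three coparent edges; these must be excluded (or their effect on the removal count otherwise controlled) using unidirectional triangle-freeness, and this is where I would concentrate the effort. The remaining steps — the counting for exactly-one removal and the identification of extra removals as false negatives — are then routine consequences of the MD guarantees and the no-cancellation property.
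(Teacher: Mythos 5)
Your first assertion and its proof coincide with the paper's argument: triangle-freeness of the skeleton forces at least one edge of the $\overline{G}$-triangle to be a coparent edge; the \texttt{MD+} steps are guaranteed to remove every non-skeleton edge (with the neighbour restriction justified, exactly as you say, by the parents being contained in the $\overline{G}$-neighbourhoods); and the counting argument then identifies a unique removed edge as the coparent edge. Up to that point you are on the paper's track.

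The second half of your proposal has a genuine gap, and you have located it yourself: everything hinges on the structural claim that a triangle of $\overline{G}$ carries \emph{exactly} one coparent edge, which you defer rather than prove. The difficulty is not merely technical --- the claim is false. Take the triplet with a single skeleton edge $y_{i}\to y_{j}$, two additional nodes $c_{1},c_{2}$ outside the triplet, and edges $y_{i}\to c_{1}$, $y_{k}\to c_{1}$, $y_{j}\to c_{2}$, $y_{k}\to c_{2}$. This LDIM is unidirectional and its skeleton is triangle-free, yet the moral graph (and, for generic dynamics, $\overline{G}$ itself) contains the triangle $\{y_{i},y_{j},y_{k}\}$ with \emph{two} coparent edges, $y_{i}-y_{k}$ and $y_{j}-y_{k}$, whose shared children lie outside the triplet. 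On this triangle \texttt{MD+} is guaranteed to remove both coparent edges, so more than one edge is removed while none of the removals is a false negative; hence your planned route of ``pinning down the unique v-structure whose apex is the third node of the triangle'' cannot succeed, and indeed the second assertion of the theorem itself fails on this example. It is worth noting that the paper's own proof commits the same oversight: after observing that ``two out of the triangle triplet nodes must be coparents,'' it silently assumes a \emph{single} coparent pair (``Without loss of generality, let the coparents be $y_{i}$ and $y_{j}$'') and concludes that any additional removal must hit a true skeleton edge. So your instinct that this is the crux is correct, but the gap cannot be closed by the argument you sketch; it would require either strengthening the hypotheses (e.g., excluding moral-graph triangles whose coparent edges arise from children outside the triplet) or weakening the conclusion of the theorem.
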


\begin{proof}


As the graph $\overline{G}$ is upper bound for the skeleton and lower bound for the moral graph, the only edges that are in $\overline{G}$ but are not in the skeleton are edges linking coparents. 
Further, as the skeleton is triangle free, two out of the triangle triplet nodes $\left(y_{i}, y_{j}, y_{k}\right)$ must be coparents in the LDIM.  

Using the steps {\tt MD1+, MD2+ and MD3+}, we try to remove an edge from the triangle consisting of the nodes $\left(y_{i}, y_{j}, y_{k}\right)$. 
As the steps in $\tt MD+$ guarantee no false positives, the edge linking the coparents will be removed. Without loss of generality, let the coparents be $y_{i}$ and $y_{j}$. If the only edge we are able to remove is the $y_{i}-y_{j}$ edge, then we have removed only the coparent edge. If we can remove at least one more edge, then we a removing a link that is indeed present in the true skeleton. 
\end{proof}

Now we provide the detailed steps of the algorithm that  either outputs the correct skeleton, with certified correctness, or outputs a flagged sparser network, stating that the sparser network is not the true one. \\
\newpage
\begin{framed}
\textbf{Unidirectional Triangle-Free Skeleton Reconstruction (UTF-SR)} \begin{itemize} 
 \item[\tt 1.]  For every $y_{i}, y_{j}$ test $wsep(y_{i}, S, y_{j})$, with $S=y\setminus\{y_{i}, y_{j}\}$, using the non-causal Wiener filter. This step outputs $\overline{G}$ which is an upper-bound for the true skeleton and a lower bound for the moral graph.
 \item[\tt 2.] For every edge $y_{i}-y_{j} \in \overline{G}$ that is part of a triangle, test if we can remove the edge by using the 3 steps of the $\tt MD+$ algorithm.
    \begin{itemize}
     \item[\tt 3.1]  If exactly one edge can be removed from each triangle then we output the correct skeleton.
    \item[\tt 3.2] Else if more than one edge can be removed from a triangle, we output a flagged skeleton indicating that we have found a skeleton that is a lower bound for the true one. 
    \end{itemize}

\end{itemize}\end{framed}

\section{Examples}~\label{sec:examples}

In this section we demonstrate the UTF Skeleton Reconstruction algorithm in several scenarios. In the first case we illustrate an example in which both the moral graph and the skeleton are retrieved correctly. The second case illustrates a situation where the algorithm outputs a flagged sparser skeleton; and the third case illustrates how the first step of the algorithms helps in identifying the correct skeleton. Lastly, we also demonstrate the UTF SR on an LDIM with feedback loop of length four. 

\begin{example}\label{ex:1}
 Consider the LDIM with causal graph as in Figure~\ref{fig:unfaithfulness_diamond}(a), with transfer functions $h_{14}=1, h_{21}=1, h_{32}=1, h_{34}=1$ and $\Phi_{EE}=I$. 
 Then, the first step in UTF-SR retrieves the correct moral graph shown in Figure~\ref{fig:unfaithfulness_diamond}(b). As all the edges are involved in a triangle, every edge is tested for removal with the $\tt MD+$ edge removal steps. With those steps, we are able to remove exactly one edge from each triangle, namely the edge $y_{2}-y_{4}$ as $cwsep(y_{2}, S=y_{1}, y_{4})$.
 Thus, we infer the correct skeleton shown in Figure\ref{fig:unfaithfulness_diamond}(c). 
\end{example}

\begin{example}\label{ex:2}

Next, consider an LDIM with the same causal graph as the LDIM from the previous example. However, in this example, the transfer functions of the LDIM are $h_{14}=2, h_{21}=2, h_{32}=2, h_{34}=-8$. 
Though with the first step of the UTF-SR algorithm we again get the correct moral graph, running $\tt MD+$ on every triangle edge, we get that $cwsep(y_{3}, \emptyset, y_{4})$, despite $h_{34} \neq 0$. We can easily check that the $\tt MD+$ edge-removal steps will also remove the edge between $y_{2}$ and $y_{4}$. Thus, we are able to remove two edges from a triangle and UTF-SR outputs the skeleton in Figure~\ref{fig:unfaithfulness_diamond}(d) which is sparser than the skeleton of the true causal graph and it is labeled as such. 
\end{example}

 \begin{figure}[h!]
	\centering
	\begin{tabular}{cccc}
	\includegraphics[width=0.2\columnwidth]{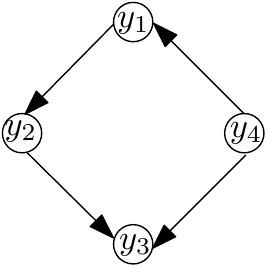} & 
	\includegraphics[width=0.2\columnwidth]{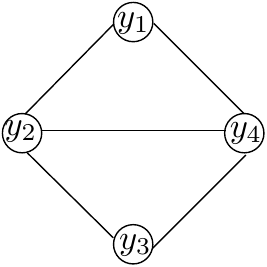} 
	& 
	\includegraphics[width=0.2\columnwidth]{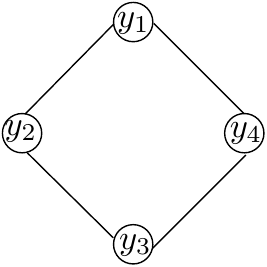} &
	\includegraphics[width=0.2\columnwidth]{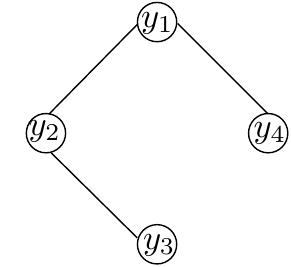} \\
	(a) & (b) & (c) & (d)
	\end{tabular}\caption{(a) The causal graph of the LDIM from Example~\ref{ex:1} and Example~\ref{ex:2}. (b) The true moral graph of the LDIMs from  Example~\ref{ex:1} and Example~\ref{ex:2}. (c) The true skeleton of the causal graph of the LDIMS from  Example~\ref{ex:1} and Example~\ref{ex:2}. (d) The sparser skeleton that is the output when applying the algorithm to the LDIM in Example~\ref{ex:2}.  \label{fig:unfaithfulness_diamond}}
\end{figure}

\begin{example}\label{ex:3}
Consider a LDIM with causal graph as in Figure~\ref{fig:simple_kin_md_third_case}(a), with transfer function:
 \[
  H(z) = \begin{pmatrix}0 & 0 & 0 & 1 & 0\\
2 & 0 & 0 & 0 & 0\\
0 & 3 & 0 & -6 & 0\\
0 & 0 & 0 & 0 & 0\\
0 & c & 0 & 6 & 0\end{pmatrix} \,, \quad \Phi_{EE}=I \,.
\]

It is easy to check that the moral graph that we obtain with the first step in the triangle-free MD algorithm is the one shown in 
 Figure~\ref{fig:simple_kin_md_third_case}(b), as the $y_{4}$ component of the non-causal Wiener filter estimating $y_{2}$ given $S=\{y_{4}, y_{1}, y_{3}, y_{5} \}$ is zero. In the graph retrieved with the first step there are no triangles to check so we get the correct skeleton. \\
 However, note that if we were to actually pass the moral graph as input to the next steps, then we wouldn't have been able to retrieve the correct skeleton. Indeed, note that though $h_{34} \neq 0$ we have that $cwsep(y_{3},\emptyset, y_{4})$, so the edge-removal steps of $\tt MD+$ will remove the edge $y_{3}-y_{4}$, leading to a false negative. Thus, the first step is not just making the algorithm more efficient, but it can also help certify the correct skeleton. 
 
  \begin{figure}[h!]
	\centering
	\begin{tabular}{cc}
	\includegraphics[width=0.35\columnwidth]{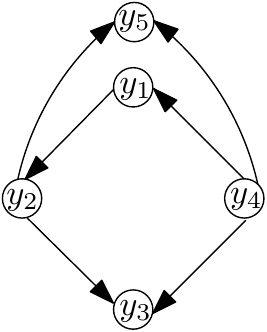} & 
	\includegraphics[width=0.35\columnwidth]{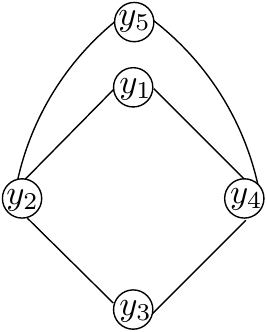} \\
	(a) & (b) 
	\end{tabular}\caption{(a) The causal graph of the LDIM from Example~\ref{ex:3} (b) The correct skeleton of the LDIM from Example~\ref{ex:3}.  \label{fig:simple_kin_md_third_case}}
\end{figure}  
 
\end{example}

\begin{example}\label{ex:4}
Consider an LDIM with causal graph shown in Figure~\ref{fig:cycle}(a) and transfer function:
\[H(z) = 
\begin{bmatrix}
0 & 0 &0 & 3 & 4  \\
1 & 0 & 0 & 0 & 0 \\
0 & \frac{1}{z} & 0 & 0 & 0\\
0 & 0 &  2 & 0 & 0 \\
0 & 0 &  0 & 3 & 0 \\
0 & 0 &  0 & 0 & 0 \\
\end{bmatrix}  \,,
\quad \Phi_{EE}=\mathcal{I} \,.
\]
 
With the first step of the triangle-free algorithm we obtain the moral graph of the LDIM shown in Figure~\ref{fig:cycle}(b). Note that the only triangle is the one in the nodes $y_{1}, y_{5}$ and $y_{4}$. Despite the cyclic causal graph of this LDIM, using the MD algorithm we can infer that there is no edge between $y_{5}$ and $y_{4}$. Indeed, the following causal Wiener-separating statements hold:
\begin{itemize}
 \item $cwsep(y_{5}, \emptyset, y_{4})$
 \item $cwsep(y_{5}, \emptyset, \frac{1}{z}y_{4})$
 \item $cwsep(y_{4}, \emptyset, \frac{1}{z}y_{5})$.
\end{itemize}
Thus, we are able to remove exactly one edge from a triangle and we infer the correct skeleton, shown in Figure~\ref{fig:cycle}(c). 
  \begin{figure}[h!]
	\centering
	\begin{tabular}{ccc}
	\includegraphics[width=0.25\columnwidth]{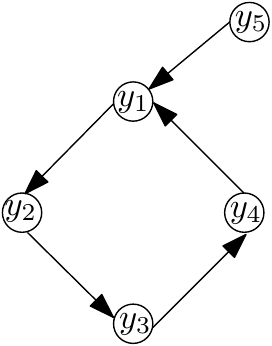} & 
	\includegraphics[width=0.25\columnwidth]{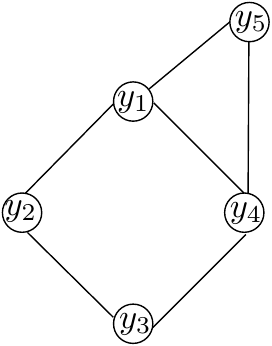} & 
	\includegraphics[width=0.25\columnwidth]{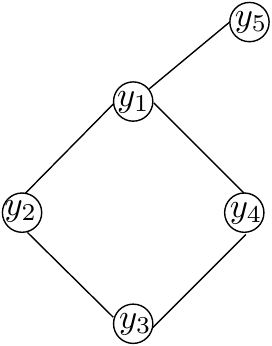} \\
	(a) & (b) & (c)
	\end{tabular}\caption{The causal graph (a) of the LDIM from Example~\ref{ex:4}, its moral graph (b) and its skeleton (c). \label{fig:cycle}}
\end{figure}  
\end{example}

\section{Conclusions}\label{sec:conclusion}

In this work we have studied the problem of topology reconstruction for a network of dynamic systems in presence of direct feedthroughs using observational data.
We have shown that the problem is generally ill-posed, and thus some additional assumptions are needed to be able to uniquely reconstruct the topology.
We have addressed this problem by restricting only the topology of the network without making any assumptions on the dynamics.
For the class of unidirectional triangle-free networks we have described an algorithm capable of either correctly recovering the network topology and providing a certificate of its correctness or recovering an approximated topology with no false positives.
Related algorithms rely on combinatorial searches over the set of nodes. The algorithm presented in this article obtains some information about the moral graph of the network in order to limit the computational complexity of such searches. This strategy is particularly effective when the network has a sparse structure.

\bibliography{trees}

\begin{thebibliography}{23}
\providecommand{\natexlab}[1]{#1}
\providecommand{\url}[1]{\texttt{#1}}
\providecommand{\urlprefix}{URL }
\expandafter\ifx\csname urlstyle\endcsname\relax
  \providecommand{\doi}[1]{doi:\discretionary{}{}{}#1}\else
  \providecommand{\doi}{doi:\discretionary{}{}{}\begingroup
  \urlstyle{rm}\Url}\fi

\bibitem[{Diestel(2012)}]{Die12}
Diestel, R. (2012).
\newblock \emph{Graph Theory}.
\newblock Springer-Verlag, Berlin, Germany.

\bibitem[{Dimovska and Materassi(2017)}]{dimovska2017granger}
Dimovska, M. and Materassi, D. (2017).
\newblock Granger-causality meets causal inference in graphical models:
  Learning networks via non-invasive observations.
\newblock In \emph{Decision and Control (CDC), 2017 IEEE 56th Annual Conference
  on}, 5268--5273. IEEE.

\bibitem[{Etesami and Kiyavash(2014)}]{etesami2014directed}
Etesami, J. and Kiyavash, N. (2014).
\newblock Directed information graphs: A generalization of linear dynamical
  graphs.
\newblock In \emph{2014 American Control Conference}, 2563--2568. IEEE.

\bibitem[{Faes et~al.(2015)Faes, Marinazzo, Jurysta, and
  Nollo}]{faes2015linear}
Faes, L., Marinazzo, D., Jurysta, F., and Nollo, G. (2015).
\newblock Linear and non-linear brain--heart and brain--brain interactions
  during sleep.
\newblock \emph{Physiological measurement}, 36(4), 683.

\bibitem[{Gon{\c{c}}alves and Warnick(2008)}]{GonWar08}
Gon{\c{c}}alves, J. and Warnick, S. (2008).
\newblock Necessary and sufficient conditions for dynamical structure
  reconstruction of lti networks.
\newblock \emph{IEEE Transactions on Automatic Control}, 53(7), 1670--1674.

\bibitem[{Granger(1969)}]{granger1969}
Granger, C. (1969).
\newblock Investigating causal relations by econometric models and
  cross-spectral methods.
\newblock \emph{Econometrica}, 37, 424--438.

\bibitem[{Koller and Friedman(2009)}]{koller2009probabilistic}
Koller, D. and Friedman, N. (2009).
\newblock \emph{Probabilistic graphical models: principles and techniques}.
\newblock MIT press.

\bibitem[{Materassi and Innocenti(2009)}]{materassi2009unveiling}
Materassi, D. and Innocenti, G. (2009).
\newblock Unveiling the connectivity structure of financial networks via
  high-frequency analysis.
\newblock \emph{Physica A: Statistical Mechanics and its Applications},
  388(18), 3866--3878.

\bibitem[{Materassi and Salapaka(2012)}]{materassi2012problem}
Materassi, D. and Salapaka, M.V. (2012).
\newblock On the problem of reconstructing an unknown topology via locality
  properties of the wiener filter.
\newblock \emph{IEEE transactions on automatic control}, 57(7), 1765--1777.

\bibitem[{Materassi and Salapaka(2019)}]{materassi2019signal}
Materassi, D. and Salapaka, M.V. (2019).
\newblock Signal selection for estimation and identification in networks of
  dynamic systems: a graphical model approach.
\newblock \emph{arXiv preprint arXiv:1905.12132}.

\bibitem[{Park and Raskutti(2016)}]{park2016identifiability}
Park, G. and Raskutti, G. (2016).
\newblock Identifiability assumptions and algorithm for directed graphical
  models with feedback.
\newblock \emph{arXiv preprint arXiv:1602.04418}.

\bibitem[{Pearl(1988)}]{Pea88}
Pearl, J. (1988).
\newblock \emph{{Probabilistic Reasoning in Intelligent Systems: Networks of
  Plausible Inference}}.
\newblock Morgan Kaufmann.

\bibitem[{Pearl(2009)}]{pearl2009causality}
Pearl, J. (2009).
\newblock \emph{Causality}.
\newblock Cambridge University Press.

\bibitem[{Quinn et~al.(2011)Quinn, Coleman, Kiyavash, and
  Hatsopoulos}]{quinn2011estimating}
Quinn, C.J., Coleman, T.P., Kiyavash, N., and Hatsopoulos, N.G. (2011).
\newblock Estimating the directed information to infer causal relationships in
  ensemble neural spike train recordings.
\newblock \emph{Journal of Computational Neuroscience}, 30(1), 17--44.

\bibitem[{Runge et~al.(2015)Runge, Petoukhov, Donges, Hlinka, Jajcay, Vejmelka,
  Hartman, Marwan, Palu{\v{s}}, and Kurths}]{runge2015identifying}
Runge, J., Petoukhov, V., Donges, J.F., Hlinka, J., Jajcay, N., Vejmelka, M.,
  Hartman, D., Marwan, N., Palu{\v{s}}, M., and Kurths, J. (2015).
\newblock Identifying causal gateways and mediators in complex spatio-temporal
  systems.
\newblock \emph{Nature Communications}, 6.

\bibitem[{Schiatti et~al.(2015)Schiatti, Nollo, Rossato, and
  Faes}]{schiatti2015extended}
Schiatti, L., Nollo, G., Rossato, G., and Faes, L. (2015).
\newblock Extended granger causality: a new tool to identify the structure of
  physiological networks.
\newblock \emph{Physiological measurement}, 36(4), 827.

\bibitem[{Sepehr and Materassi(2016)}]{sepehr2016inferring}
Sepehr, F. and Materassi, D. (2016).
\newblock Inferring the structure of polytree networks of dynamic systems with
  hidden nodes.
\newblock In \emph{Decision and Control (CDC), 2016 IEEE 55th Conference on},
  4618--4623. IEEE.

\bibitem[{Sepehr and Materassi(2019)}]{sepehr2019blind}
Sepehr, F. and Materassi, D. (2019).
\newblock Blind learning of tree network topologies in the presence of hidden
  nodes.
\newblock \emph{IEEE Transactions on Automatic Control}.

\bibitem[{Seth et~al.(2015)Seth, Barrett, and Barnett}]{seth2015granger}
Seth, A.K., Barrett, A.B., and Barnett, L. (2015).
\newblock Granger causality analysis in neuroscience and neuroimaging.
\newblock \emph{Journal of Neuroscience}, 35(8), 3293--3297.

\bibitem[{Spirtes et~al.(2000)Spirtes, Glymour, and
  Scheines}]{spirtes2000causation}
Spirtes, P., Glymour, C.N., and Scheines, R. (2000).
\newblock \emph{Causation, prediction, and search}.
\newblock MIT press.

\bibitem[{Uhler et~al.(2013)Uhler, Raskutti, B{\"u}hlmann, Yu
  et~al.}]{uhler2013geometry}
Uhler, C., Raskutti, G., B{\"u}hlmann, P., Yu, B., et~al. (2013).
\newblock Geometry of the faithfulness assumption in causal inference.
\newblock \emph{The Annals of Statistics}, 41(2), 436--463.

\bibitem[{Wiener(1949)}]{Wie49}
Wiener, N. (1949).
\newblock \emph{Extrapolation, interpolation, and smoothing of stationary time
  series}, volume~2.
\newblock MIT press Cambridge, MA.

\bibitem[{Yue et~al.(2017)Yue, Thunberg, Pan, Ljung, and
  Gon{\c{c}}alves}]{yue2017linear}
Yue, Z., Thunberg, J., Pan, W., Ljung, L., and Gon{\c{c}}alves, J. (2017).
\newblock Linear dynamic network reconstruction from heterogeneous datasets.
\newblock \emph{IFAC-PapersOnLine}, 50(1), 10586--10591.

\end{thebibliography}

\end{document}